\newcommand{\be}{\begin{equation}}
\newcommand{\ee}{\end{equation}}
\newcommand{\dis}{\displaystyle}
\numberwithin{figure}{section}
\numberwithin{equation}{section}
\numberwithin{table}{section}
\theoremstyle{plain}
\newtheorem{thm}{Theorem}[section]
\newtheorem{prop}[thm]{Proposition}
\newtheorem{theorem}[thm]{Theorem}
\newtheorem{lemma}[thm]{Lemma}
\theoremstyle{definition}
\newtheorem{dfn}[thm]{Definition}
\newtheorem{example}[thm]{\it Example}
\theoremstyle{remark}
\newtheorem{remark}[thm]{\it Remark}
\title{Re-factorising a QRT map}
\author{Nalini Joshi}
\address{School of Mathematics and Statistics, University of Sydney, NSW 2006, Australia}
\curraddr{}
\email{nalini.joshi@sydney.edu.au}
\thanks{}
\author{Pavlos Kassotakis }
\address{Department of Mathematics and Statistics, University of Cyprus, P.O Box: 20537, 1678 Nicosia, Cyprus}
\curraddr{}
\email{pavlos1978@gmail.com, pkasso01@ucy.ac.cy}
\thanks{}
\subjclass[2010]{37J35}
\begin{document}
\maketitle

\begin{abstract}
A QRT map is the composition of two involutions on a biquadratic curve: one switching the $x$-coordinates of two intersection points with a given horizontal line, and the other switching the $y$-coordinates of two intersections with a vertical line. Given a QRT map, a natural question is to ask whether it allows a decomposition into further involutions. Here we provide new answers to this question and show how they lead to a new class of maps, as well as known HKY maps and quadrirational Yang-Baxter maps.
\begin{center}
\emph{Dedicated to Reinout Quispel on the occasion of his 65th birthday.}
    \end{center}
\end{abstract}

\section{Introduction}
Geometric dynamics was the subject of a small reading group in which Reinout Quispel and the first author participated as early career researchers. The group's discussions overlapped with Reinout's study of what are now famously called Quispel-Roberts-Thompson or QRT maps \cite{QRT-1988,QRT-1989}. In this paper, we consider refactorisations of these maps, which lead to unexpected connections with other classes of maps and new examples of integrable maps.

A QRT map is the composition of two involutions on a biquadratic curve (see Figure \ref{fig1}): one switching the $x$-coordinates of two intersection points with a given horizontal line, and the other switching the $y$-coordinates of two intersections with a vertical line { \cite{QRT-1988,QRT-1989,tsuda-2004,duistermaat-2010}. A natural question to ask is whether there exist further factorisations of these involutions. We answer this question in the current paper.

Let $P(x, y)$ be a biquadratic polynomial, that is, a quadratic function of each variable. Then the level set $P(x, y)=0$ defines a biquadratic curve ${\mathcal P}$ and any given horizontal line or vertical line intersects ${\mathcal P}$ at two points. The level sets of $P(x,y)$ are indexed by a free parameter $\lambda$ and the corresponding one-parameter family of curves ${\mathcal P}(\lambda)$ is called a {\em pencil} of biquadratic curves. Each curve in this pencil can be regarded as a fibre in a two-dimensional surface, which forms the phase space of the dynamical system given by the QRT map.

\subsection{Main result}
In this paper, we provide a new formulation of QRT involutions in terms of Hirota derivatives and discover conditions under which each involution can be factorized into two further involutions. These lead us to new connections between two major theories that generalize QRT maps.
The first is the class of quadrirational Yang-Baxter maps \cite{ABSyb-2004}. The second class, often called HKY maps, arises when one biquadratic curve in a pencil is mapped to another in a periodic cycle  \cite{haggar-1996,hirota-2001}. For further information and details of these two classes, as well as the history of QRT maps, we refer the reader to Section \ref{s:bg}. Furthermore, we obtain new, natural, consistent maps that satisfy the equation
\be \label{byb}
 (\mathscr{L}_{ij}\circ \mathscr{L}^{-1}_{ik} \circ \mathscr{L}_{jk})^4=id, \;\;i\neq j\neq k\in \mathbb{N},
\ee
where each $\mathscr{L}_{ij}$ is a periodic map with period $4$ and with $\mathscr{L}^{-1}_{ij}$ refers to its inverse. Throughout the paper, we use $id$ to refer to the identity operator, while $I$ is used for an invariant preserved by a map.

\subsection{Background}\label{s:bg}
Since a QRT map leaves a biquadratic pencil (and thereby its free parameter) invariant, it has a conserved quantity and is completely integrable in the sense of Liouville \cite{maeda-1987,veselov-IM-1991,santini-1991} in dynamical systems theory. By resolving the singularities of the compactified phase space,  Tsuda \cite{tsuda-2004} showed that the resulting (two-complex-dimensional) space is a rational elliptic surface fibered by the pencil ${\mathcal P}$. For a comprehensive study of the geometry of this space and QRT maps we refer to Duistermaat's book \cite{duistermaat-2010}.

The name QRT arises from an 18-parameter Liouville integrable map introduced by Quispel, Roberts and Thompson  \cite{QRT-1988,QRT-1989} in 1988.  At that time, the map was thought of as  generalisation of a second order mapping introduced in 1971 by McMillan \cite{mcmillan-1971}.
\begin{center}
\begin{figure}[H]
\begin{minipage}{6.8cm}
\begin{tikzpicture}[scale=0.60]
\draw[rotate=0] [red] (-2,-1) .. controls (-3,-0.5) and (-3,0.5) .. (-2,1) .. controls (-1.4,1.4) .. (-1,2) .. controls (-0.5,3) and (0.5,3) .. (1,2) .. controls (1.4,1.4) .. (2,1) .. controls (3,0.5) and (3,-0.5) .. (2,-1) .. controls (1.4,-1.4) .. (1,-2) .. controls (0.5,-3) and (-0.5,-3) .. (-1,-2) .. controls (-1.4,-1.4) .. (-2,-1);
 \draw[fill] (-2.5,-0.6) circle (0.08cm) ;
\node at (-2.1,-0.3) {{\tiny (x,y)}};
 \draw (-4,-0.6) -- (4,-0.6); \node at (4.1,-0.5) {{\tiny $i_1$}};
 \draw[fill] (2.5,-0.6) circle (0.08cm) ;
\node at (1.8,-0.3) {{\tiny (${\tilde x}$,y)}};
  \draw (2.5,-3.5) -- (2.5,3.5); \node at (2.2,-4) {{\tiny $i_2$}};
\draw[fill] (2.5,0.65) circle (0.08cm) ;
\node at (2.0,1.1) {{\tiny (${\tilde x}$,${\tilde y}$)}};
\end{tikzpicture}
\caption{The QRT map} \label{fig1}
\end{minipage}
\begin{minipage}{6.8cm}
\begin{tikzpicture}[scale=0.60]
\draw[rotate=60] [blue] (-2,-1) .. controls (-3,-0.5) and (-3,0.5) .. (-2,1) .. controls (-1.4,1.4) .. (-1,2) .. controls (-0.5,3) and (0.5,3) .. (1,2) .. controls (1.4,1.4) .. (2,1) .. controls (3,0.5) and (3,-0.5) .. (2,-1) .. controls (1.4,-1.4) .. (1,-2) .. controls (0.5,-3) and (-0.5,-3) .. (-1,-2) .. controls (-1.4,-1.4) .. (-2,-1);
 \draw[rotate=15] [red] (-2,-1) .. controls (-3,-0.5) and (-3,0.5) .. (-2,1) .. controls (-1.4,1.4) .. (-1,2) .. controls (-0.5,3) and (0.5,3) .. (1,2) .. controls (1.4,1.4) .. (2,1) .. controls (3,0.5) and (3,-0.5) .. (2,-1) .. controls (1.4,-1.4) .. (1,-2) .. controls (0.5,-3) and (-0.5,-3) .. (-1,-2) .. controls (-1.4,-1.4) .. (-2,-1);
 \draw[fill] (-2.7,-0.6) circle (0.08cm) ;
\node at (-2.8,-0.3) {{\tiny (x,y)}};
 \draw (-4,-0.6) -- (4,-0.6); \node at (4.1,-0.4) {{\tiny $i_1=j_1\circ j_2$}};
 \draw[fill] (2.1,-0.6) circle (0.08cm) ;
 \draw[fill] (-1.9,-0.6) circle (0.08cm) ;
\node at (-1.7,-0.3) {{\tiny $({\tilde x},y)$}};
\node at (1.7,-0.3) {{\tiny (${\hat {\tilde x}}$,y)}};
  \draw (2.1,-3.5) -- (2.1,3.5); \node at (2.3,-4) {{\tiny $i_2=k_1\circ k_2$}};
  \draw[fill] (2.1,0.2) circle (0.08cm) ;
\node at (1.5,0.5) {{\tiny (${\hat {\tilde x}}$,${\tilde y}$)}};
\draw[fill] (2.1,1.34) circle (0.08cm) ;
\node at (1.65,1.9) {{\tiny (${\hat {\tilde x}}$,${\hat {\tilde y}}$)}};
\end{tikzpicture}
\caption{A QRT re-factorisation}   \label{fig2}
\end{minipage}
\end{figure}
\end{center}
However, the iteration of maps on elliptic curves has a long history. For example, such maps arise in a study of finite order groups by Burnside in 1911\cite{burnside-1911}. Burnside's map is also mentioned by Baker
\cite{baker-1933} and it reads
\begin{equation} \label{burnside}
 (x,y)\mapsto \left(y,\frac{1-x-y}{1-x}\right).
\end{equation}
It is a period-5 symplectic map whose iterates lie on the elliptic curve
\[(1-x)(1-y)-t\, x y (x+y-1)=0,\]
where $t$ is a free (constant) parameter.
Under the translation $x\mapsto u+1,$ $y\mapsto v+1,$  \eqref{burnside} becomes a special case of the Lyness map
\[ (u,v)\mapsto \left(v,\frac{a\,v+\alpha^2}{u}\right),   \]
with $\alpha=1=a$ \cite{lyness-1942,lyness-1945}. The Lyness map with $\alpha=1$ also appears in the context of frieze patterns introduced by Coxeter \cite{coxeter-1971}.

Another example was introduced in 1959 by Mulholland and Smith \cite{mulholland-smith-1959} and independently by Scheuer and Mandel \cite{scheuer-mandel-1959} in a study of population genetics. These results together with those of Lyness were generalised by Penrose and Smith \cite{penrose-smith-1981} where they introduced a family of quadratic maps in $\mathbb{P}^2$ that are invariant on a cubic curve.

More recent studies have focused on generalizations of QRT maps. One major thread of research considered transformations from one curve to another one in the biquadratic pencil. Transformations considered were periodic in the sense that after a certain number of iterations, the initial curve was preserved. To our knowledge, the first study in this direction was by Haggar et al. \cite{haggar-1996}, in which they considered maps that admit  $k$-integrals, i.e., functions that are conserved by the $k$-th iterate of the map but not by the original map on its own. Later, Hirota et al.  \cite{hirota-2001}  discovered more examples of such mappings, later referred to as non-QRT or HKY maps \cite{kimura-2002,joshi-2006,atkinson-2008,kassotakis-20061,kassotakis-2006,kassotakis-2013}. More general maps of this sort were introduced in \cite{kassotakis-2010} by the current authors and further generalised by Roberts and Jogia \cite{rob-2015}, where they  presented a complete description of birational maps that fix one coordinate and send a pencil of biquadratic curves to another pencil of biquadratic curves.

Another major direction of related research came from the study of the quantum Yang-Baxter equation, which originates from the theory of exactly solvable models in statistical mechanics \cite{yang-1967,baxter-1982}. It reads:
\be \label{qyb}
\mathscr{R}_{ij}\; \mathscr{R}_{ik}\;  \mathscr{R}_{jk} = \mathscr{R}_{jk}\; \mathscr{R}_{ik}\;  \mathscr{R}_{ij},
\ee
where  $\mathscr{R}: V\otimes V \mapsto V\otimes V$ is a linear operator and $\mathscr{R}_{lm}$ the operator that acts as $\mathscr{R}$ on the $l$-th and $m$-th factor of the tensor product
$V\otimes V \ldots \otimes V$. For a history and the early developments of the theory see \cite{Jimbo-1989}.

By replacing $V$ with any set $X$ and the tensor product $\otimes$ with the Cartesian product $\times$, Drinfeld \cite{drinfeld-1992} introduced the {\it set theoretical version } of (\ref{qyb}). Solutions
of the latter appeared under the name {\it set theoretical solutions of the quantum Yang-Baxter equation} \cite{sklyanin-1988,etingof-1999}; note that another class of solutions appeared in \cite{viallet-1995,hietarinta-1997}. The term {\it Yang-Baxter maps} was proposed by Veselov \cite{veselov-2003} as an alternative name to Drinfeld's one. Early results on Yang-Baxter maps were provided in \cite{adler-1993,NY-1998,KNY-2002A,Nij-Maill}.

The classification of Yang-Baxter maps is a difficult and interesting problem. In the simplest case where the set $X={\mathbb{P}}^1\equiv \mathbb P^1,$  equivalence classes of Yang-Baxter maps  under the requirement of quadrirationality (see Section \ref{quadri}) were obtained in \cite{ABSyb-2004} and complemented in \cite{pap3-2010}. Note that  the Yang-Baxter maps under consideration in \cite{ABSyb-2004,pap3-2010} are involutions $\mathscr{R}^2=id,$ hence the Yang-Baxter equation (\ref{qyb}) can be cast into the form:
\be \label{ayb}
 (\mathscr{R}_{ij}\circ \mathscr{R}_{ik} \circ \mathscr{R}_{jk})^2=id, \;\;i\neq j\neq k\in \mathbb{N}.
\ee
The connection of Yang-Baxter maps with integrable partial difference equations was originated in \cite{pap2-2006} and completed in \cite{KaNie,PKMN2,PKMN3,KaNie:2018} where also the connection with higher degree integrable quad relations was revealed. Moreover, the interplay between Yang-Baxter maps and discrete integrable systems led to fruitful  results \cite{atk-2013,BAZHANOV2018509,Caudrelier_2013,Atkinson:2018,Dimakis2018,Dimakis2018ii,AtkNie,Grahovski:2016,Mikhailov2016,Kouloukas:2018,Kassotakis:2019}.

\subsection{Outline of the paper}
The structure of the paper is as follows.
In Section 2, we define the QRT map and seed involutions, and obtain a factorisation of each QRT involution into two consecutive ones for some specific parameter matrices. We consider outcomes from this factorisation, namely Liouville integrable maps, quadrirational Yang-Baxter maps and some novel maps which are solutions of the equation (\ref{byb}) in Section 3. In Section 4 we conclude this paper with a discussion.

\section{QRT and seed involutions}
The QRT mapping is defined by the composition $\phi:=i_2\circ i_i$ of two non-commuting involutions $i_1,$ $i_2,$   which each preserve the same biquadratic invariant
\[
{\dis I(x,y)=\frac{{\bf X}^T\mathscr{A}_0{\bf
Y}}{{\bf X}^T\mathscr{A}_1{\bf Y}}},
\]
where ${\bf X}$, ${\bf Y}$ are vectors ${\bf X}=(x^2,x,1)^T,$ ${\bf Y}=(y^2,y,1)^T$ and $\mathscr{A}_i,$ $i=0,1$, are $3\times 3$ matrices, given by
\[
{\dis \mathscr{A}_i=\left(\begin{array}{ccc}
\alpha_i& \beta_i &\gamma_i\\
\delta_i& \epsilon_i& \zeta_i\\
\kappa_i& \lambda_i & \mu_i
\end{array}\right)}.
\]
We recall that the QRT map bi-rationally preserves the linear pencil of bi-quadratic curves
$$
P(x,y;t):={\bf X}^T\mathscr{A}_0{\bf Y}-t\;{\bf X}^T\mathscr{A}_1{\bf Y}.
$$
The more general case of curve dependant maps that preserve a general biquadratic foliation ie. $P(x,y;t):={\bf X}^TA(t){\bf Y},$ was studied in detail in a series of papers by Roberts et.al. \cite{Iatrou-2001,Iatrou-2002,Pettigrew-2008}.
If the matrices $\mathscr{A}_i,$ $i=0,1,$ are symmetric, i.e., $\mathscr{A}_i=\mathscr{A}_i^T$, the QRT mapping is called {\it symmetric}, due to its invariant being symmetric under the interchange of $x$ and $y$. If they are antisymmetric, i.e., $\mathscr{A}_i=-\mathscr{A}_i^T$, the QRT is called {\it antisymmetric}, otherwise it is called {\it asymmetric}.

The involutions $i_1:(x,y)\mapsto (X,Y)$ and $i_2: (x,y)\mapsto (X,Y),$ are defined by the solution of the equations $I(X,y)-I(x,y)=0$ and $I(x,Y)-I(x,y)=0$ respectively, which each have 2 solutions. One solution is the identity map and, if we omit these trivial solutions, it was shown by Quispel et al. \cite{QRT-1988,QRT-1989} that the other solution is given by
\begin{subequations}\label{QRT}
\begin{equation}\label{QRTa}
i_1:(x,y)\mapsto (X,Y),\;\;\left\{ \begin{array}{l}
X={\dis \frac{f_1(y)-f_2(y)x}{f_2(y)-f_3(y)x}}\\[3mm]
  Y=y
    \end{array}\right.,\;\;
i_2:(x,y)\mapsto (X,Y),\;\;\left\{\begin{array}{l}
X=x\\
  Y={\dis\frac{g_1(x)-g_2(x)y}{g_2(x)-g_3(x)y}}
    \end{array}\right.,
    \end{equation}
where
\begin{equation}\label{QRTb}
\begin{pmatrix}
f_1(y)\\
f_2(y)\\
f_3(y)
\end{pmatrix}
=(\mathscr{A}_0{\bf Y})\times(\mathscr{A}_1{\bf Y}),\qquad
\begin{pmatrix}
g_1(x)\\
g_2(x)\\
g_3(x)
\end{pmatrix} =(\mathscr{A}_0^T{\bf X})\times(\mathscr{A}_1^T{\bf X}).
\end{equation}
\end{subequations}
We found an alternative representation of these involutions, given by
\begin{equation} \label{h-qrt}
i_1:(x,y)\mapsto (X,Y),\;\left\{ \begin{array}{l}
{\displaystyle X}=x-2{\displaystyle \frac{D_x\; n\cdot d}{\partial_x\; D_x\; n\cdot d}}\\
Y=y
\end{array}\right.,\;\;
i_2:(x,y)\mapsto (X,Y),\;\left\{ \begin{array}{l}
X=x\\
{\displaystyle Y}=y-2{\displaystyle \frac{D_y\; n\cdot d}{\partial_y\; D_y\; n\cdot d}},
\end{array}\right.
\end{equation}
where
\begin{equation}\label{eq:nm}
n={\bf X}^T\mathscr{A}_0{\bf Y}, \quad
d={\bf X}^T\mathscr{A}_1{\bf Y},
\end{equation}
are respectively the numerator and the denominator of the QRT invariant $I(x,y)$. Moreover, $\partial_x\equiv \frac{\partial }{\partial x},$ $\partial_y\equiv \frac{\partial }{\partial y}$ are the usual partial differentiation operators, while $D_x,$ $D_y$ are the Hirota bilinear operators i.e.
\[ D_x \;f \cdot g=f_x\;g-f\;g_x, \quad D_y \;f \cdot g=f_y\;g-f\;g_y.\]
This representation of the QRT involutions is related to the affine case of  the so-called curve-dependent McMillan maps developed by Iatrou and Roberts in \cite{Iatrou-2001,Iatrou-2002}.

We now define the terminology used in the results below.
\begin{dfn}
The terms base point, singular point, QRT involution, and seed involution are defined as follows.
\begin{enumerate}[leftmargin=1.4cm,label={\rm (\roman*)}]
\item {\bf Base points:}
Points  $(x,y)$ that are contained in all curves of the pencil of biquadratic curves $P(x,y;t)$ are called \emph{base points}. For QRT maps, a base point is given by
\be \label{base-points}
x=\frac{f_1(y)}{f_2(y)}=\frac{f_2(y)}{f_3(y)},\quad \mbox{or} \quad y=\frac{g_1(x)}{g_2(x)}=\frac{g_2(x)}{g_3(x)}.
\ee
\item[(ii)] {\bf Singular points of the QRT map:}
Points $(x,y)$ where the QRT map is not defined, i.e., both numerators and denominators of the QRT map are zero or infinity, will be referred to as \emph{singular points}. For the QRT map, singular points correspond exactly to the base points of the pencil of bi-quadratic curves preserved by the map.
\item[(iii)] {\bf {\it QRT} involutions:}
The involutions $i_1, i_2$, defined in Equation (\ref{h-qrt}), are called QRT involutions.
\item[(iv)] {\bf Seed involutions:}
Define the involutions:
\begin{equation}\label{eq:seed}
    \begin{split}
&j_1:(x,y)\mapsto (X,Y),\;\left\{ \begin{array}{l}
{\displaystyle X}=x-2{\displaystyle \frac{n}{\partial_x\; n}}\\
Y=y
\end{array}\right.,\;\;
j_2:(x,y)\mapsto (X,Y),\;\left\{ \begin{array}{l}
{\displaystyle X}=x-2{\displaystyle \frac{d}{\partial_x\; d}}\\
Y=y
\end{array}\right., \\
&
k_1:(x,y)\mapsto (X,Y),\;\left\{ \begin{array}{l}
X=x\\
{\displaystyle Y}=y-2{\displaystyle \frac{n}{\partial_y\; n}}
\end{array}\right., \;\;
k_2:(x,y)\mapsto (X,Y),\;\left\{ \begin{array}{l}
X=x\\
{\displaystyle Y}=y-2{\displaystyle \frac{d}{\partial_y\; d}}
\end{array}\right.,
\end{split}
\end{equation}
where $n$ and $d$ are given in Equation \eqref{eq:nm}. The involutions $j_1, j_2, k_1, k_2$ will be referred to as \emph{ seed involutions}.
\end{enumerate}
\end{dfn}

We impose the following condition on the seed involutions:
\be\label{cond}
\begin{split}
(j_1\circ j_2)^n&=id\quad\textrm{and/or}\quad (k_1\circ k_2)^n=id,\;\;\textrm{for some}\quad  n\geq 2 \in \mathbb N ,\\
j_1\circ j_2&\not=id\quad \textrm{and}\phantom{/or}\quad  k_1\circ k_2\not=id.
\end{split}
\ee
\begin{remark}
The seed involutions as they stand they do not appear to be related to integrability. Below we show that the imposition of the condition above leads to an integrability structure under certain conditions, which in turn restrict the parameter space.
\end{remark}

In the following lemma we present the necessary conditions for the seed involutions to satisfy the relations above for $n=2,3,4,5.$ The method can be extended to arbitrary $n$.
\begin{lemma} \label{lamma1}
The seed involutions defined in Equation \eqref{eq:seed} satisfy Equation \eqref{cond} for $n=2,3,4,5,$ if the following respective conditions  hold:\\
\begin{center}
\begin{tabular}{|c|c|c|}
  \hline
  &&\\
   & $(j_1\circ j_2)^n=id$ & $(k_1\circ k_2)^n=id$ \\[10pt]
   \hline
  $n=2$ & $m=0$ & $M=0$ \\[6pt]
  $n=3$ & $4m^2-kl=0$ & $4M^2-KL=0$ \\[6pt]
  $n=4$ & $m(2m^2-kl)=0$ & $M(2M^2-KL)=0$ \\[6pt]
  $n=5$ & $16m^4-12klm^2+(kl)^2=0$ & $16M^4-12KLM^2+(KL)^2=0$ \\[6pt]
    \hline
\end{tabular}
\end{center}
where
\begin{equation} \label{dfnre}
    \begin{aligned}
&m=-n_xd_x-D^2_x n\cdot d, & M=-n_yd_y-D^2_y n\cdot d,& & k=-n n_{xx}+D_x n\cdot n_x,\\
&K=-n n_{yy}+D_y n\cdot n_y,& l=-d d_{xx}+D_x d\cdot d_x,&& L=-d d_{yy}+D_y d\cdot d_y.
\end{aligned}
\end{equation}
Here $D^2_x, D^2_y$ are Hirota operators, defined for given non-negative integers $p$, $q$ by
\[
D^p_x D^q_y\ f\cdot g = (\partial_x-\partial_{x'})^p(\partial_y-\partial_{y'})^q f(x, y)g(x', y')\Bigm|_{x'=x, y'=y}.
\]
Furthermore, subscripts on $n$ and $d$ denote partial derivatives in the usual way.
\end{lemma}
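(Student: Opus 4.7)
The plan is to exploit the fact that both seed involutions $j_1$ and $j_2$ fix the $y$-coordinate, so their composition $\Phi:=j_1\circ j_2$ acts on $x$ alone with $y$ playing the role of a parameter. Writing $n=a_n x^2+b_n x+c_n$ and $d=a_d x^2+b_d x+c_d$, where $a_n,b_n,c_n,a_d,b_d,c_d$ depend only on $y$, the formulas in \eqref{eq:seed} reduce to
\[
j_1:x\mapsto\frac{-b_n x-2c_n}{2a_n x+b_n},\qquad j_2:x\mapsto\frac{-b_d x-2c_d}{2a_d x+b_d},
\]
so each $j_i$ is a M\"obius involution represented in $PGL_2$ by a matrix of the shape $\begin{pmatrix}-b&-2c\\2a&b\end{pmatrix}$. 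Consequently $\Phi$ is itself a M\"obius transformation in $x$ with matrix $J_1J_2$, and the iteration condition $\Phi^p=id$ is equivalent to $(J_1J_2)^p$ being a scalar multiple of the identity.

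Next I would compute the trace and determinant of $J_1J_2$ and match them against the Hirota quantities in \eqref{dfnre}. A direct multiplication gives
\[
\operatorname{tr}(J_1J_2)=2b_nb_d-4(a_nc_d+a_dc_n),\qquad \det(J_1J_2)=(b_n^2-4a_nc_n)(b_d^2-4a_dc_d).
\]
Expanding $D_x^2 n\cdot d=n_{xx}d-2n_xd_x+nd_{xx}$ shows $m=n_x d_x-n_{xx}d-nd_{xx}=b_nb_d-2(a_nc_d+a_dc_n)$, while $k=n_x^2-2nn_{xx}$ and $l=d_x^2-2dd_{xx}$ collapse to the discriminants $b_n^2-4a_nc_n$ and $b_d^2-4a_dc_d$. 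Hence $\operatorname{tr}(J_1J_2)=2m$ and $\det(J_1J_2)=kl$. This identification is the only genuine piece of computation in the argument and is precisely what the bilinear re-writing of the QRT involutions is designed to absorb.

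With this dictionary in hand I would apply the classical finite-order criterion in $PGL_2$: a non-scalar $J\in GL_2$ satisfies $J^p=\text{scalar}$ iff the ratio $\lambda/\mu$ of its eigenvalues is a $p$-th root of unity, equivalently $\operatorname{tr}(J)^2/\det(J)=4\cos^2(\pi s/p)$ for some $s\in\{1,\ldots,p-1\}$. Substituting $\operatorname{tr}^2/\det=4m^2/(kl)$, the cases $p=2,3$ yield the single conditions $m=0$ and $4m^2-kl=0$; the case $p=4$ combines the primitive order-$4$ condition $2m^2-kl=0$ with the order-$2$ condition $m=0$, producing the factored relation $m(2m^2-kl)=0$; and for $p=5$ the two admissible values $(3\pm\sqrt5)/2$ of $\operatorname{tr}^2/\det$ are the roots of $T^2-3T+1=0$, so clearing denominators gives $16m^4-12klm^2+(kl)^2=0$.

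Finally, the conditions on $k_1\circ k_2$ are obtained by repeating the argument verbatim after interchanging the roles of $x$ and $y$: this turns the triple $(m,k,l)$ into its $y$-counterpart $(M,K,L)$ in \eqref{dfnre}. The main obstacle is simply bookkeeping in the Hirota-to-M\"obius dictionary of the second paragraph; conceptually everything reduces to elementary linear algebra in $PGL_2$, and the same recipe extends to arbitrary $p$ by taking the minimal polynomial of $4\cos^2(\pi s/p)$ over admissible $s$ and substituting $4m^2/(kl)$.
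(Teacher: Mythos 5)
Your proposal is correct, and it reaches the stated conditions by a genuinely different and more structural route than the paper. The paper's proof is a direct computation: it composes $j_1\circ j_2$ explicitly, obtaining
\[
X=x-2\,\frac{D_x\,d\cdot n}{\partial_x D_x\,d\cdot n+m},
\]
then iterates once more to get $(j_1\circ j_2)^2$ in closed form and reads off $m=0$ as the necessary condition; the cases $n=3,4,5$ are dispatched with the remark that they ``may be proved in the same way,'' i.e.\ by brute-force computation of higher iterates. You instead observe that, with $y$ frozen, each $j_i$ is a M\"obius involution with matrix $\begin{pmatrix}-b&-2c\\ 2a&b\end{pmatrix}$, identify $\operatorname{tr}(J_1J_2)=2m$ and $\det(J_1J_2)=kl$ (both identities check out: $m=n_xd_x-n_{xx}d-nd_{xx}=b_nb_d-2(a_nc_d+a_dc_n)$ is indeed independent of $x$, and $k,l$ collapse to the two discriminants), and then invoke the trace criterion for finite order in $PGL_2$. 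This buys you a uniform treatment: the table entries for $n=2,3,4,5$ drop out as the minimal polynomials of $4\cos^2(\pi s/n)$ evaluated at $4m^2/(kl)$, the exclusion of $j_1\circ j_2=id$ in \eqref{cond} is exactly the non-scalar hypothesis, and the recipe visibly extends to all $n$ --- something the paper only asserts. Your version also explains \emph{why} the conditions depend only on the combinations $m$ and $kl$. Two small points worth making explicit if you write this up: the quantities $m,k,l$ are functions of $y$, so each condition must hold identically in $y$ (your setup over the field $\mathbb{C}(y)$ handles this, but say so); and the trace criterion as you state it presupposes $\det(J_1J_2)=kl\neq 0$, i.e.\ that neither biquadratic degenerates in $x$, which should be flagged as a standing non-degeneracy assumption.
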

\begin{proof}
We give a proof for the case $n=2$. The remaining cases may be proved in the same way.

Observe that the composition of the seed involutions $j_1$ and $j_2$ leads to
\begin{equation*}
\begin{split}
&j_1\circ j_2: (x,y)\mapsto (X,Y),\\
&\mbox{where}\;\;\begin{cases}
\ {\displaystyle X}&=x-2{\displaystyle \frac{D_x \; d\cdot n}{\partial_x\; D_x \;d\cdot n-n_xd_x-D^2_x\; n\cdot d}}=x-2{\displaystyle \frac{D_x \; d\cdot n}{\partial_x\; D_x \;d\cdot n+m}},\\
\ Y&=y,
\end{cases}
\end{split}
\end{equation*}
with $m:=-n_xd_x-D^2_x\; n\cdot d$. Moreover, we have
\begin{equation*}
\begin{split}
&(j_1\circ j_2)^2: (x,y)\mapsto (X,Y),\\
&\mbox{where}\;\;\begin{cases}
\ {\displaystyle X}&=x-4m{\displaystyle \frac{ D_x \; d\cdot n}{kl-2m^2+2m D_x \; d\cdot n}},\\
\ Y&=y,
\end{cases}
\end{split}
\end{equation*}
with $k,l,m$ as defined in \eqref{dfnre}. Clearly a necessary condition for $(j_1\circ j_2)^2=id,$ is $m:=-n_xd_x-D^2_x n\cdot d=0.$
(The remaining possibility $D_x d\cdot n=0$ is not included, as it leads to a violation of Condition \eqref{cond}.)
Moreover, for $m=0,$ $j_1\circ j_2$ becomes the QRT involution  (\ref{h-qrt}). Working similarly, from the seed involutions $k_1,$ $k_2$, we find that a necessary condition for $(k_1\circ k_2)^2=id,$ is  $M:=-n_yd_y-D^2_y n\cdot d=0.$
\end{proof}
\begin{remark}
The results of Lemma \ref{lamma1} lead to constraints on the parametric matrices $A_0$ and $A_1.$ The seed involutions associated to these restricted matrices inherit more structure, i.e., for $n=2$, the function $I=n/d$ is a semi-invariant of the seed involutions $j_i, k_i,\;i=1,2$. Note that a function $I$ is called a {\em semi-invariant} of a map $\phi,$ if it satisfies $I\circ\phi=-I.$
\end{remark}
For the remainder of this paper we restrict our attention to the case $n=2$. We now determine the resulting constraints on the parametric matrices that follow from the case $n=2$ of Lemma  (\ref{lamma1}) and discuss the integrable structure inherited by the seed involutions.
\begin{lemma} \label{prop00}
The conditions $m=M=0$, where $m, M$ are given in (\ref{cond}), lead to
\be \label{matrices00}
{\dis \mathscr{A}_0=\left(\begin{array}{ccc}
\alpha& \beta &\gamma\\
\delta& \epsilon& \zeta\\
\kappa& \lambda & \mu
\end{array}\right)},\quad
{\dis \mathscr{A}_1=\left(\begin{array}{ccc}
|\mathscr{A}_0|_{33}& 2|\mathscr{A}_0|_{32} &|\mathscr{A}_0|_{31}\\
2|\mathscr{A}_0|_{23}& 4|\mathscr{A}_0|_{22} &2|\mathscr{A}_0|_{21}\\
|\mathscr{A}_0|_{13}& 2|\mathscr{A}_0|_{12} &|\mathscr{A}_0|_{11}
\end{array}\right)},
\ee
where $|\mathscr{A}_0|_{ij}$ is the $(ij)-$th minor determinant of the matrix $\mathscr{A}_0$.
\end{lemma}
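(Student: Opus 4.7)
The plan is to convert the two identities $m\equiv 0$ and $M\equiv 0$ into an explicit linear system on the entries of $\mathscr{A}_1$ (regarding those of $\mathscr{A}_0$ as given), and then to identify its canonical solution with the stated weighted-minor formula. First, decompose
\[
n=A(y)x^2+B(y)x+C(y),\qquad d=A'(y)x^2+B'(y)x+C'(y),
\]
with $A(y)=\alpha y^2+\beta y+\gamma$, $B(y)=\delta y^2+\epsilon y+\zeta$, $C(y)=\kappa y^2+\lambda y+\mu$, and $A',B',C'$ the analogous quadratics built from the entries of $\mathscr{A}_1$. Using $D_x^2\, n\cdot d=n_{xx}d-2n_x d_x+n d_{xx}$, a direct expansion shows that the $x^2$ and $x$ contributions cancel identically and leaves
\[
m=-n_x d_x-D_x^2\,n\cdot d=BB'-2(AC'+A'C),
\]
a polynomial in $y$ alone of degree at most four. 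Hence $m\equiv 0$ translates into exactly five scalar linear relations between the entries of $\mathscr{A}_0$ and $\mathscr{A}_1$, one for each coefficient $y^{k}$, $k=0,\dots,4$.

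Performing the same computation with the roles of $x$ and $y$ interchanged, i.e.\ writing $n=\tilde A(x)y^2+\tilde B(x)y+\tilde C(x)$ with $\tilde A(x)=\alpha x^2+\delta x+\kappa$, $\tilde B(x)=\beta x^2+\epsilon x+\lambda$, $\tilde C(x)=\gamma x^2+\zeta x+\mu$, and the corresponding primed quantities for $d$, the analogous cancellation gives $M=\tilde B\tilde B'-2(\tilde A\tilde C'+\tilde A'\tilde C)$, a polynomial in $x$ alone of degree at most four. Setting $M\equiv 0$ yields five additional linear relations in the entries of $\mathscr{A}_1$. Together one obtains ten linear scalar conditions on the nine entries of $\mathscr{A}_1$.

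It remains to verify that the ansatz \eqref{matrices00} solves this linear system and to record in which sense it is unique. The verification is a direct substitution: each of the ten coefficient identities collapses to a Laplace expansion of $\det\mathscr{A}_0$ along two different rows or columns, producing the difference of two identical triple products and therefore vanishing (I have already checked this for a representative coefficient, where the substitution reduces the equation to $0=0$ after cancellation of the six triple-product monomials of the form $\alpha\epsilon\mu$, $\beta\zeta\kappa$, etc.). Uniqueness up to the overall scaling $\mathscr{A}_1\mapsto c\,\mathscr{A}_1$, which merely reparametrises the pencil $n-t\,d$, follows from showing that for generic $\mathscr{A}_0$ the $10\times 9$ coefficient matrix has rank eight; equivalently, one may solve the five $m$-equations recursively for, say, the top row, middle column and $(3,3)$-entry of $\mathscr{A}_1$, and then observe that the five $M$-equations are automatically satisfied as consequences of standard $3\times 3$ minor identities.

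The main obstacle is not conceptual but rather the bookkeeping in the explicit substitution; the shortest route is the verification step outlined above, which reduces the problem to cofactor identities for $\det\mathscr{A}_0$, together with a rank check for uniqueness.
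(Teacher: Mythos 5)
Your proposal follows essentially the same route as the paper's proof: both reduce $m\equiv0$, $M\equiv0$ to the coefficient identities $bb_1=2(ac_1+a_1c)$ and $BB_1=2(AC_1+A_1C)$ for the quadratic-in-$x$ (resp.\ quadratic-in-$y$) decompositions of $n$ and $d$, obtain a linear system on the entries of $\mathscr{A}_1$, and identify its solution with the weighted-minor matrix; the paper ``solves'' where you ``verify and check uniqueness,'' and your count of ten coefficient equations (the paper says eight, which is really the generic rank) is the more accurate one.

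There is, however, one incorrect sub-claim in your uniqueness step: the assertion that once the five $m$-equations are solved, the five $M$-equations ``are automatically satisfied as consequences of standard $3\times3$ minor identities'' cannot be right. The five $m$-equations alone cut the nine-dimensional space of candidate matrices $\mathscr{A}_1$ down to a (generically) four-parameter family; if the $M$-equations imposed nothing further, the solution set would stay four-dimensional and the lemma's conclusion --- a single matrix up to the overall scale of the pencil --- would fail. For instance, with $\alpha=\beta=\epsilon=\mu=1$ and all other entries of $\mathscr{A}_0$ zero, the $m$-equations leave $\alpha_1,\beta_1,\lambda_1,\mu_1$ free, and it is precisely the $M$-equations that force $\beta_1=\delta_1=\lambda_1=\zeta_1-2\mu_1=0$ and $\alpha_1=\mu_1$, pinning the answer down to a multiple of the stated $\mathscr{A}_1$. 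So you must lean entirely on your first route --- exhibiting a nonvanishing $8\times8$ minor of the $10\times9$ coefficient matrix for generic $\mathscr{A}_0$, or equivalently carrying out the elimination using both families of equations as the paper does. Be aware also that the genericity hypothesis is genuinely needed: for special $\mathscr{A}_0$ (e.g.\ diagonal, where one finds the two-parameter family $\mathscr{A}_1=\mathrm{diag}(s,2s+2t,t)$ when $\mathscr{A}_0$ is the identity) the rank drops to $7$ and the solution is not unique up to scale.
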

\begin{proof}
The conditions $m=0$ and $M=0$ read respectively:
\be \label{n=2-cond}
\begin{split}
&m=-n_xd_x-D^2_x n\cdot d=bb_1-2(ac_1+a_1c)=0,\\
&M=-n_yd_y-D^2_y n\cdot d=BB_1-2(AC_1+A_1C)=0,
\end{split}
\ee
where
$$
(a,b,c)^T=\mathscr{A}_0{\bf Y}, \; (a_1,b_1,c_1)^T=\mathscr{A}_1{\bf Y}, \;(A,B,C)={\bf X}^T\mathscr{A}_0, \;(A_1,B_,C_1)={\bf X}^T\mathscr{A}_1.
$$
$n={\bf X}^T\mathscr{A}_0{\bf Y}$ and $d={\bf X}^T\mathscr{A}_1{\bf Y}.$
Equations (\ref{n=2-cond})  are quartic polynomials in $y$  and $x$  respectively. Equating their coefficients to zero leads to $8$ equations  that involve the entries of $\mathscr{A}_0, \mathscr{A}_1.$ This system of $8$ equations is linear with respect to the entries of the matrix $\mathscr{A}_0$ or of the matrix $\mathscr{A}_1$.
By considering the entries of the matrix $\mathscr{A}_1$ as unknowns, 
and solving the resulting linear system leads exactly to (\ref{matrices00}).
\end{proof}
Note that in this setting (\ref{n=2-cond}) arose as commutation conditions of the seed involutions. Conditions (\ref{n=2-cond}) also coincide with formulae (30) of Theorem 2 of \cite{rob-2015} if one considers the source and the target biquadratics to be the same  but with fibre values of opposite sign. 
At the same time, if we demand the  QRT map to factorise as the product $j_1\circ j_2\circ k_1\circ k_2$ of the seed involutions, we end up with the same conditions.   This factorisation was also a cornerstone in the construction of  HKY maps in \cite{kassotakis-2010}, since it served a perfect-square discriminantal condition (see point (ii) of the proof of Theorem \ref{theorem1} below).

The following theorem gives properties of the QRT involutions, seed involutions, and the base points of the pencil $P(x,y;t)=0,$ for the case of the restricted parameter matrices of Lemma (\ref{prop00}).
\begin{theorem} \label{theorem1}
For the integral ${\dis I(x,y)={\bf X}^T\mathscr{A}_0{\bf
Y}/{\bf X}^T\mathscr{A}_1{\bf Y}},$  or equivalently for the pencil of bi-quadratic curves $
P(x,y;t):={\bf X}^T\mathscr{A}_0{\bf Y}-t\;{\bf X}^T\mathscr{A}_1{\bf Y}
$  where ${\bf X}$, ${\bf Y}$ are vectors ${\bf X}=(x^2,x,1)^T,$ ${\bf Y}=(y^2,y,1)^T$ and $\mathscr{A}_0,$ $\mathscr{A}_1$ are the parameter matrices of Lemma (\ref{prop00}), the following results hold.
\begin{enumerate}[leftmargin=1.4cm,label={\rm (\roman*)}]
\item The QRT involutions $i_1,$ $i_2$ given by (\ref{QRT})  preserve $I(x,y)$.
\item The seed involutions $j_1,\; j_2,\; k_1,\; k_2,$ given below, anti-preserve $I(x,y)$. Here  $j_1, j_2, k_1, k_2: (x,y)\mapsto (X,Y)$ are defined by
 $$
\begin{array}{l}
 j_1:\left\{ \begin{array}{l}
X= -{\dis \frac{2(\kappa y^2+\lambda y+\mu)+(\delta y^2+\epsilon y+\zeta)x}{\delta y^2+\epsilon y+\zeta+2(\alpha y^2+\beta y+\gamma)x}} ,\\[3mm]
   Y=y
    \end{array}\right.\\
j_2:\left\{\begin{array}{l}
X= -{\dis \frac{2(|\mathscr{A}_0|_{13} y^2+|\mathscr{A}_0|_{12} y+|\mathscr{A}_0|_{11})+(|\mathscr{A}_0|_{23} y^2+|\mathscr{A}_0|_{22} y+|\mathscr{A}_0|_{21})x}{|\mathscr{A}_0|_{23} y^2+|\mathscr{A}_0|_{22} y+|\mathscr{A}_0|_{21}+
2(|\mathscr{A}_0|_{33} y^2+|\mathscr{A}_0|_{32} y+|\mathscr{A}_0|_{31})x}} ,\\ [3mm]
   Y=y
    \end{array}\right.
    \end{array}
$$
 $$\begin{array}{l}
k_1:\left\{ \begin{array}{l}
X=x ,\\[3mm]
Y= -{\dis \frac{2(\gamma x^2+\zeta x+\mu)+(\beta x^2+\epsilon x+\lambda)y}{\beta x^2+\epsilon x+\lambda+2(\alpha x^2+\delta x+\kappa)y}} ,
    \end{array} \right.\\
k_2:\left\{\begin{array}{l}
X=x ,\\ [3mm]
Y= -{\dis \frac{2(|\mathscr{A}_0|_{31} x^2+|\mathscr{A}_0|_{21} x+|\mathscr{A}_0|_{11})+(|\mathscr{A}_0|_{32} x^2+|\mathscr{A}_0|_{22} x+|\mathscr{A}_0|_{12})y}{|\mathscr{A}_0|_{32} x^2+|\mathscr{A}_0|_{22} x+|\mathscr{A}_0|_{12}+
2(|\mathscr{A}_0|_{33} x^2+|\mathscr{A}_0|_{23} x+|\mathscr{A}_0|_{13})y}} .
     \end{array} \right.
     \end{array}
$$

\item Let the set $\Sigma=\{P_1,\ldots, P_8\}$ consist of the base points of $P_i=(x_i,y_i)\in \mathbb{P}^1\times \mathbb{P}^1$ of the pencil of biquadratic curves
 $P(x,y;t)=0.$ Then $\Sigma=\sigma_1 \cup \sigma_2$,  where $\sigma_1=\{ B_1,\ldots B_4 \}, \sigma_2=\{ b_1,\ldots b_4 \}$, which are defined by
$B_i=(f_1(Y_i)/f_2(Y_i),Y_i)$ and
$b_i=(f_1(y_i)/f_2(y_i),y_i)$, where $Y_i$ are the roots of the quartic polynomial
\[
\begin{split}
h&= y^4 (\delta^2 - 4 \alpha \kappa) +
 y^3 (2 \delta \epsilon - 4 \beta \kappa -
    4 \alpha \lambda)  \\
    &\qquad +
 y^2 (\epsilon^2 + 2 \delta \zeta - 4 \gamma \kappa -
    4 \beta \lambda - 4 \alpha \mu) +
 y (2 \epsilon \zeta - 4 \gamma \lambda -
    4 \beta \mu) +\zeta^2- 4 \gamma \mu
\end{split}
\]
   while $y_i$ are the roots of the quartic polynomial
\[
\begin{split}
H&=y^4 (\delta_1^2 - 4 \alpha_1 \kappa_1) +
 y^3 (2 \delta_1 \epsilon_1 - 4 \beta_1 \kappa_1 -
    4 \alpha_1 \lambda_1)  \\
    &\qquad  +
y^2 (\epsilon_1^2 + 2 \delta_1 \zeta_1 - 4 \gamma_1 \kappa_1 -
    4 \beta_1 \lambda_1 - 4 \alpha_1 \mu_1)  \\
    &\qquad + y (2 \epsilon_1 \zeta_1 - 4 \gamma_1 \lambda_1 -
    4 \beta_1 \mu_1) +\zeta_1^2- 4 \gamma_1 \mu_1
\end{split}
\]
where $\alpha_1, \beta_1, \ldots \mu_1$ are entries of the matrix $\mathscr{A}_1$.
\end{enumerate}
\end{theorem}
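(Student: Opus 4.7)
The plan is to handle all three parts in a unified way using the single exchange identity
$$n_{xx}\, d - n_x d_x + n\, d_{xx} = 0,$$
which is a rewriting of the condition $m = -n_x d_x - D_x^2 n\cdot d = 0$ from Lemma \ref{prop00}, together with its $y$-analogue $n_{yy} d - n_y d_y + n d_{yy} = 0$ coming from $M = 0$.

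Part (i) is essentially by construction: the QRT involution $i_1$ is defined as the non-identity root (in $X$) of $n(X,y)d(x,y) - n(x,y)d(X,y)=0$, and similarly for $i_2$, so $I$ is preserved automatically. For part (ii) I would first derive the stated closed-form rational expressions. Writing $n = a\, x^2 + b\, x + c$ with $(a,b,c)^T = \mathscr{A}_0\mathbf{Y}$ (as in the proof of Lemma \ref{prop00}), we have $n_x = 2 a x + b$, so a direct simplification of $j_1: X = x - 2 n/n_x$ gives $X = -(b x + 2 c)/(2 a x + b)$; this is the formula stated for $j_1$ once $a,b,c$ are expanded in $y$. The formula for $j_2$ follows identically from $d = a_1 x^2 + b_1 x + c_1$, whose coefficients are the $2\times 2$ minor determinants of $\mathscr{A}_0$ prescribed by Lemma \ref{prop00}, and $k_1, k_2$ are the $x \leftrightarrow y$ analogues. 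Anti-preservation then follows from the Taylor expansions of $n$ and $d$ in $x$, which are exact since both are quadratic in $x$: substituting $X - x = -2 n/n_x$ into $n(X, y) = n + (X-x) n_x + \tfrac{1}{2}(X-x)^2 n_{xx}$ and the analogous expansion of $d(X,y)$ yields
$$n(X,y)\, d(x,y) + n(x,y)\, d(X,y) \;=\; \frac{2 n^2}{n_x^2}\bigl(n_{xx} d - n_x d_x + n d_{xx}\bigr),$$
which vanishes by the identity above; hence $I \circ j_1 = -I$. The bracketed expression is symmetric under $n \leftrightarrow d$, so the same calculation gives $I \circ j_2 = -I$, and the $y$-analogue with $M = 0$ handles $k_1, k_2$.

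For part (iii), any base point $(x_0, y_0)$ satisfies $n(x_0, y_0) = d(x_0, y_0) = 0$, and specialising the exchange identity at such a point forces $n_x d_x = 0$ there. Hence either $n(\cdot, y_0)$ or $d(\cdot, y_0)$ has a double $x$-root at $x_0$. In the first case $y_0$ is a zero of the discriminant $b^2 - 4 a c$, which upon expansion in the entries $\alpha,\beta,\ldots,\mu$ equals precisely the quartic $h$ stated in the theorem, giving four values $Y_i$ with double roots $x_i = -b(Y_i)/(2 a(Y_i))$. To identify these with $f_1(Y_i)/f_2(Y_i) = (b c_1 - c b_1)/(c a_1 - a c_1)$, I would combine $b^2 = 4 a c$ with the base-point condition $d(x_i, Y_i) = 0$, i.e.\ $a_1 b^2 - 2 a b b_1 + 4 a^2 c_1 = 0$; a short elimination yields $2 a f_1 + b f_2 = 0$, equivalent to $x_i = -b/(2a) = f_1/f_2$. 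The parallel case (double $x$-root of $d$) produces the four roots $y_i$ of the analogous quartic $H = b_1^2 - 4 a_1 c_1$ in the entries of $\mathscr{A}_1$, and four base points $b_i = (f_1(y_i)/f_2(y_i), y_i)$, giving the decomposition $\Sigma = \sigma_1 \cup \sigma_2$ of the eight base points.

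The main obstacle is the lone non-obvious algebraic step in (iii): the identification $-b/(2a) = f_1/f_2$ at the double-root base points, which requires eliminating between the discriminant condition $b^2 = 4 a c$ and the relation $d = 0$ to extract the linear identity $2 a f_1 + b f_2 = 0$. The Taylor-series computation underpinning (ii) and the expansions of the discriminants into the quartics $h$ and $H$ are entirely mechanical.
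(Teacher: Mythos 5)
Your proposal is correct, and although it arrives at the same conclusions as the paper it does so by noticeably different mechanics in parts (ii) and (iii). For (ii) the paper solves the quadratic $I(X,y)+I(x,y)=0$ in $X$ and observes that its discriminant becomes a perfect square exactly under the conditions (\ref{n=2-cond}), the two rational roots then being the seed involutions; you instead verify the stated formulas directly, using the exact quadratic Taylor expansion to reduce $n(X,y)\,d(x,y)+n(x,y)\,d(X,y)$ to $\tfrac{2n^2}{n_x^2}\bigl(n_{xx}d-n_xd_x+nd_{xx}\bigr)$, which vanishes because this bracket equals $-m$ (a small attribution slip: the condition $m=M=0$ originates in the $n=2$ case of Lemma \ref{lamma1}, and Lemma \ref{prop00} only converts it into the matrix constraint). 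Both arguments are sound, and yours has the virtue of a single identity symmetric under $n\leftrightarrow d$ that handles $j_1,j_2,k_1,k_2$ at once. For (iii) the paper simply asserts that the degree-8 base-point polynomial $f_2^2-f_1f_3$ factorises as $hH$ for the restricted matrices; your route --- specialising $m=0$ at a point with $n=d=0$ to force $n_xd_x=0$, so that every base point sits at a double $x$-root of $n(\cdot,y_0)$ or of $d(\cdot,y_0)$, whence $h=b^2-4ac$ and $H=b_1^2-4a_1c_1$ --- actually explains why the factorisation occurs, and your elimination $2af_1+bf_2=ab\,d(-b/(2a))$ correctly pins the $x$-coordinate to $f_1/f_2$ in both cases. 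The one point worth tightening is the converse inclusion: your argument shows every base point lies in $\sigma_1\cup\sigma_2$, and to conclude equality you should either invoke the generic count of eight base points or note that $m=0$ combined with $b^2=4ac$ already forces $d(-b/(2a),Y_i)=0$ (the same cancellation as in your elimination), so each of the eight candidate points genuinely is a base point.
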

\begin{proof} The first statement is well-known and we include it for compeleteness.
\begin{enumerate}[leftmargin=1.4cm,label={\rm (\roman*)}]
\item The solution of $I(X,y)-I(x,y)=0$ and $I(x,Y)-I(x,y)=0,$ apart from the trivial solutions $X=x$ and $Y=y,$  give respectively the QRT involutions $i_1, i_2.$
\item The invariant can be written as
\[ {\dis I(x, y)=\frac{{\bf X}^T\mathscr{A}_0{\bf Y}}{{\bf X}^T\mathscr{A}_1{\bf Y}}=\frac{a(y)x^2+b(y)x+c(y)}{a_1(y)x^2+b_1(y)x+c_1(y)}=\frac{A(x)y^2+B(x)y+C(x)}{A_1(x)y^2+B_1(x)y+C_1(x)} }.\]
The quadratic equation  $I(X,y)+I(x,y)=0$ has two rational solutions for $X$ when its discriminant is a perfect square. This is true when \be \label{c1}
 b b_1=2(a_1c+ac_1).
\ee
 Similarly for the equation $I(x,Y)+I(x,y)=0$, we obtain the condition \be \label{c2}
BB_1=2 (A_1 C+A C_1).
\ee
Relations (\ref{c1}) and (\ref{c2}) are exactly the conditions (\ref{n=2-cond}), so from Lemma  (\ref{prop00}) together with the definition of seed involutions, we obtain exactly the desired involutions stated in the theorem.
\item Because of the special form of the parametric matrices (\ref{matrices00}), using the first  formula of (\ref{base-points}) leads to the set of singular points $\{(x_s,y_s)\}$ of the QRT map being $\{(\frac{f_1(y_s)}{f_2(y_s)},y_s)\}$, where $y_s$ is a solution of the polynomial $h H=0.$ (The use of the second formula gives exactly the same results.)
\end{enumerate}
\end{proof}

\begin{remark}
When the source and the target biquadratics are the same  but with fibre values that are of opposite sign, the involutions $ k_i, i=1,2,$ could also be seen as the involutions $P^+, P^-$ of \cite{rob-2015} together with their analogues $ j_i, i=1,2,$ when one considers a fixed $y$. We note that the involutions  $j_i,  i=1,2,$ first appeared in the literature in \cite{kassotakis-2010} and both $j_i, k_i, i=1,2$ were presented by one of us  in the SIDE 9 conference in Varna 2010.
The defining polynomial of the base points of a pencil of biquadratic curves or equivalently of the singular points of the associated QRT map, is a degree 8 polynomial (see (\ref{base-points})). For the specific pencil of biquadratic curves defined in the theorem above, this degree 8 polynomial factorises to the product of two degree 4 polynomials. Namely the polynomials $h$ and $H.$ 
\end{remark}

In the following proposition, we refer to fixed points.  Recall that a point $(x_0,y_0)$ is a fixed point of a mapping $\phi$,  if   $\phi(x_0,y_0)=(x_0,y_0)$. Item (ii) of this proposition points out that the pair $\{j_1, k_2\}$ of seed involutions share fixed points and base points and similarly for $\{j_2, k_1\}$. However, the role of these points are interchanged between the pairs. In Item (iii), we show that  these seed involutions also form a group, which is a subgroup of a triangle Coxeter group.
\begin{prop} \label{prop1}
For the seed involutions $j_i,\; k_i\; i=1,2,$  the following results hold.
\begin{enumerate}[leftmargin=1.4cm,label={\rm (\roman*)}]
\item  $j_1\circ j_2=i_1,$\; $k_1\circ k_2=i_2,$ where $i_1, i_2$ are the QRT involutions associated with the parameter matrices $\mathscr{A}_0,\; \mathscr{A}_1$ given  in Lemma (\ref{prop00}). Hence the QRT mapping $\phi=i_1\circ i_2=j_1\circ j_2\circ k_1 \circ k_2$ refactorises as the product of seed involutions.
\item  The seed involutions $j_1, k_2$  have singularities given by the elements of the set $\sigma_1$ and fixed points given by elements of the set $\sigma_2$. On the other hand, $j_2, k_1$ have elements of $\sigma_1$ as fixed points and elements of $\sigma_2$ as singularities.
\item The seed involutions and their compositions form a  group:
\[
G=\left\{ j_1,j_2,k_1,k_2 \ \bigm|\  j_i^2=k_i^2=(j_1\circ j_2)^2=(k_1\circ k_2)^2=(j_{i}\circ k_{i+1})^2=id, \; i=1,2 \right\}.
\]
 Furthermore, $G\subset G', $ where
 \[
 \begin{split}
 G'&=\Bigl\{ J, R, S \ \bigm|\ J^2= R^2= S^2=(R\circ S)^2=(J\circ S)^4=(J\circ R\circ S)^4\\
 &\phantom{=\Bigl\{ J, R, S \ \bigm|\ J^2= R^2= S^2=(R\circ S)^2} =(S\circ R \circ J\circ R)^4=id  \Bigr\} ,
 \end{split}
 \]
 where
    $$
    \begin{array}{l}
    J:\left(x,y; \mathscr{A}_0\right)\mapsto \left(-{\dis \frac{2(\kappa y^2+\lambda y+\mu)+(\delta y^2+\epsilon y+\zeta)x}{\delta y^2+\epsilon y+\zeta+2(\alpha y^2+\beta y+\gamma)x}},y ;\mathscr{A}_0 \right),\\[3mm]
    R: \left(x,y; \mathscr{A}_0\right)\mapsto \left(y,x ; \mathscr{A}_0^T  \right),\\ [3mm]
    S: \left(x,y; \mathscr{A}_0\right)\mapsto \left(y,x ; \mathscr{A}_2 \right),\\ [3mm]
    \end{array}
    $$
     where
$$
{\dis \mathscr{A}_2=\frac{1}{4 det(\mathscr{A}_0)}\left(\begin{array}{ccc}
|\mathscr{A}_0|_{33}& 2|\mathscr{A}_0|_{32} &|\mathscr{A}_0|_{31}\\
2|\mathscr{A}_0|_{23}& 4|\mathscr{A}_0|_{22} &2|\mathscr{A}_0|_{21}\\
|\mathscr{A}_0|_{13}& 2|\mathscr{A}_0|_{12} &|\mathscr{A}_0|_{11}
\end{array}\right).}
    $$
Note that the group $ G'$ is the hyperbolic $(2,4,\infty)$ triangle Coxeter group modulo the relation $(R\circ S)^2=(J\circ S)^4=(J\circ R\circ S)^4=(S\circ R \circ J\circ R)^4=id$.
\item  The generators $R,J,S$ act on the invariant $I$ as follows:
\[ I\circ R= I,\ I\circ J= -I,\ I\circ S= 1/I.\]
\item  The seed involutions satisfy
\[ j_1=J,\ j_2=R\circ S\circ J \circ R \circ S,\ k_1=R\circ J \circ R,\ k_2=S\circ J \circ S.\]
\item We have two birational representations of the $BC_2$ Coxeter group, namely, the subgroups of $G'$, $A=\{ J,  S  | J^2=S^2=(J\circ S)^4=id\}$ and $B=\{ S,  R\circ J \circ R  | S^2=(R\circ J \circ R)^2=(S \circ R\circ J \circ R)^4=id\}.$
\item  The maps $\Omega_1= j_1\circ k_2=(J\circ S)^2$ and $\Omega_2= j_2\circ k_1= (S \circ R\circ J \circ R)^2$ are elements  of $BC_2,$ arising from  two different representations of the  latter. Moreover they are quadrirational maps (see Section \ref{quadri}).


\end{enumerate}
\end{prop}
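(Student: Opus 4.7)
My plan is as follows. For part (i), I invoke the computation at the start of the proof of Lemma~\ref{lamma1}: its formula for $j_1\circ j_2$ reduces, once $m=0$ is imposed, to $X=x-2(D_x\,d\cdot n)/(\partial_x D_x\,d\cdot n)$, which is precisely the QRT involution $i_1$ of \eqref{h-qrt}. The analogous calculation with $M=0$ gives $k_1\circ k_2=i_2$, and composing yields $\phi=j_1\circ j_2\circ k_1\circ k_2$. For (ii), I use the closed forms in Theorem~\ref{theorem1}(ii). Writing $j_1$ with denominator $n_x$ and numerator $2n-xn_x$, the singular locus is cut out by $n=n_x=0$, i.e.\ by the base points at which the quadratic $n(\cdot,y)$ has a double root; its discriminant in $x$ is exactly the quartic $h(y)$, so these are the four points of $\sigma_1$, with $x$-coordinate equal to the double root $f_1(Y_i)/f_2(Y_i)$. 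The remaining four base points have $n=0$ but $n_x\neq 0$, so $j_1$ fixes them; their $y$-coordinates are roots of the complementary quartic $H$, identifying them with $\sigma_2$. The statements for $j_2,k_1,k_2$ follow by swapping $n\leftrightarrow d$ and $x\leftrightarrow y$.

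For (iii), the involutive relations $j_i^2=k_i^2=\mathrm{id}$ are built in, $(j_1\circ j_2)^2=i_1^2=\mathrm{id}$ and $(k_1\circ k_2)^2=i_2^2=\mathrm{id}$ restate (i), and the main content — also the principal obstacle — is the commutation relations $(j_i\circ k_{i+1})^2=\mathrm{id}$ asserting that $j_1,k_2$ commute and $j_2,k_1$ commute. I would verify these by direct composition of the explicit formulas of Theorem~\ref{theorem1}(ii), using the constraints \eqref{n=2-cond} to cancel the cross-terms and collapse the resulting rational expression to the identity. Once this is in place, the inclusion $G\subset G'$ is set up by taking the identifications of part (v) as definitions and checking that the Coxeter relations of $G'$ follow from the combined coordinate-and-matrix actions of $J,R,S$; the higher-order relations $(J\circ S)^4=(J\circ R\circ S)^4=(S\circ R\circ J\circ R)^4=\mathrm{id}$ reduce, after tracking how $S$ composes with itself, to adjugate-type identities for the $3\times 3$ matrix $\mathscr{A}_0$ together with the commutations already verified for $G$.

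For (iv), $I\circ R=I$ since swapping $\mathbf X\leftrightarrow\mathbf Y$ together with $\mathscr{A}_i\mapsto\mathscr{A}_i^T$ leaves $\mathbf X^T\mathscr{A}_i\mathbf Y$ invariant; $I\circ J=-I$ is the semi-invariance of $j_1$ recorded in the remark after Lemma~\ref{prop00}; and $I\circ S=1/I$ follows from the adjugate identity that the Lemma~\ref{prop00} map $\mathscr{A}_0\mapsto\mathscr{A}_2$ is simultaneously paired with $\mathscr{A}_1\mapsto\mathscr{A}_0/(4\det\mathscr{A}_0)$, so $S$ interchanges the numerator and denominator of $I$ up to a common scalar. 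Item (v) is a direct substitution check against the explicit formulas of $j_2,k_1,k_2$ in Theorem~\ref{theorem1}(ii). For (vi), the subgroups $\langle J,S\rangle$ and $\langle S,R\circ J\circ R\rangle$ of $G'$ are, by presentation alone, copies of the $BC_2$ Coxeter group (two involutions whose product has order $4$). Finally, for (vii), $\Omega_1=(J\circ S)^2$ and $\Omega_2=(S\circ R\circ J\circ R)^2$ are the central involutions of these two $BC_2$ copies and coincide with $j_1\circ k_2$ and $j_2\circ k_1$ respectively; their quadrirationality is established by writing each as a birational map $\mathbb P^1\times\mathbb P^1\to\mathbb P^1\times\mathbb P^1$ and verifying that each of the four one-variable projections admits a rational inverse, as required by the definition of quadrirationality recalled in Section~\ref{quadri}.
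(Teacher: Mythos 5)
Your proposal is correct and proceeds by the same method as the paper, whose entire proof consists of the remark that each assertion follows by direct computation together with a single sample computation for item (i); your derivation of (i) by setting $m=0$ in the composition formula for $j_1\circ j_2$ is exactly the observation already recorded in the proof of Lemma \ref{lamma1}, and the remaining items are the finite symbolic checks you describe. The one step you should not leave to formal symmetry is in item (ii): the $n\leftrightarrow d$, $x\leftrightarrow y$ swap identifies the singular locus of $k_2$ with the base points at which $d(x,\cdot)$ has a double root in $y$, and the fact that these are the \emph{same} four base points $\sigma_1$ (defined through the quartic $h(y)$, the discriminant of $n(\cdot,y)$ in $x$) is a separate consequence of the restricted form \eqref{matrices00} of $\mathscr{A}_1$ that must be verified directly rather than inferred from the swap.
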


\begin{proof}
Each assertion follows by direct computation. For conciseness, we provide only one such computation here. Consider successive actions of the seed involutions
$$
(x,y)\in {\mathcal P}(\lambda) \xmapsto{j_1} ({\tilde x},y)\in {\mathcal P}(-\lambda) \xmapsto{j_2}({\hat {\tilde x}},y)\in {\mathcal P}(\lambda) \xmapsto{k_1}({\hat {\tilde x}},{\tilde y})\in {\mathcal P}(-\lambda) \xmapsto{k_2} ({\hat {\tilde x}},{\hat {\tilde y}})\in {\mathcal P}(\lambda),
$$
where the tilde and hat annotation of the points indicate the images of horizontal and vertical switches as indicated in Figures \ref{fig1}--\ref{fig2}. This proves that $j_1\circ j_2=i_1,$\; $k_1\circ k_2=i_2,$ , from which it follows that
the QRT map can be written as $\phi=i_1\circ i_2=j_1\circ j_2\circ k_1 \circ k_2:(x,y)\in {\mathcal P}(\lambda)\mapsto ({\hat {\tilde x}},{\hat {\tilde y}})\in {\mathcal P}(\lambda),$ as asserted in item (i).
\end{proof}

\section{Integrable maps}


In this section, we consider outcomes from the factorisation of the QRT map in terms of seed involutions, given in Section 2.
Using the generators of the group $G$ we can construct  {\em words} of length 2 and 3 that correspond to the following maps:
\be \label{hky}
\Upsilon_i=j_i\circ j_{i+1}\circ k_i,\;\; \Phi_i=k_i\circ k_{i+1}\circ j_i,\;\; \Psi_i=j_i\circ k_i\;\; \Omega_i=j_i\circ k_{i+1}\;\; i=1,2,
\ee
where the indices are considered modulo $2$.
Note that  $\Upsilon_i$ and $\Phi_i$ are of HKY-type since they anti-preserve the integral $I(x,y),$  while $\Psi_i$ and $\Omega_i$  preserve it. Note also that the QRT map $(\phi)$ can be expressed in terms of $\Psi_i$ as
$$
\phi=i_2\circ i_1= k_1 \circ k_2 \circ j_1 \circ j_2= \Psi_2 \circ \Psi_1.
$$
Note that here we have achieved a refactorisation of a QRT map in terms of 4 involutions $j_i, k_i,$ $i=1,2$ or in terms of 2 non-commuting infinite order maps $\Psi_i.$
In terms of the generators of the group $G'$, the mappings (\ref{hky}) read
\be \label{hky'}
\begin{aligned}
\Upsilon_1&=(J\circ R\circ S)^2R \circ J \circ R,&\Phi_1&=(R\circ J\circ S)^2 \circ J,\\
\Upsilon_2&=(J\circ R\circ S)^2S \circ J \circ S,&\Phi_2&=(R\circ J\circ S)^2 \circ R \circ S \circ J \circ R \circ S,
\\
\Psi_1&=(J\circ R)^2,&\Omega_1&=(J\circ S)^2,\\
\Psi_2&=(R\circ S\circ J\circ S)^2,&\Omega_2&=(R\circ S \circ J \circ R)^2.
\end{aligned}
\ee
{  A mapping of the plane $\psi : (x, y) \mapsto (X,Y)$ is called (anti) measure-preserving with
density $m(x, y)$, if the Jacobian determinant ($J$) of $\psi$ can be written as $J =\frac{m(x,y)}{m(X,Y)}.$
 Anti measure-preservation
corresponds to a measure-preserving and orientation-reversing mapping  \cite{Roberts-1992}. A measure preserving map of the plane is Liouville integrable since it holds $m(x,y)dx\wedge dy= m(X,Y)dX\wedge dY$.
\begin{remark}
Mappings (\ref{hky}) and (\ref{hky'}) are measure or anti measure-preserving maps with density ${\displaystyle m(x,y)=({\bf X}^T\mathscr{A}_0{\bf
Y})^{-1}}$. Hence they are Liouville integrable maps. Moreover, all words of the group $G$ or $G'$ correspond to  Liouville integrable maps.
\end{remark}}

\begin{example}[The mapping $\Psi_1$]
Consider $\Psi_1=(J\circ R)^2,$  where $J\circ R$ itself is a non-QRT map that explicitly reads:
$$
  J\circ R:\left(x,y; \mathscr{A}_0\right)\mapsto \left(-{\dis \frac{2(\gamma x^2+\zeta x+\mu)+(\beta x^2+\epsilon x+\lambda)y}{\beta x^2+\epsilon x+\lambda+2(\alpha x^2+\delta x+\kappa)y}},x ;\mathscr{A}_0^T \right).
$$
Since, for this map, the parameters vary as $\mathscr{A}_0\mapsto \mathscr{A}_0^T,$ it follows that $\alpha, \epsilon, \mu$ are constants, while the remaining parameters satisfy
$$
\begin{array}{l}
\beta=c_1+c_2(-1)^n\\
\delta=c_1-c_2(-1)^n
\end{array},\;
\begin{array}{l}
\gamma=c_3+c_4(-1)^n\\
\kappa=c_3-c_4(-1)^n
\end{array},\;
\begin{array}{l}
\lambda=c_5+c_6(-1)^n\\
\zeta=c_5-c_6(-1)^n
\end{array},
$$ where $c_i, i=1,\ldots, 6$ constants.  Under this identification, we have the non-autonomous  map
\be \label{qv}
y_{n+1}=- \frac{2(\gamma y_n^2+\zeta y_n+\mu)+(\beta y_n^2+\epsilon y_n+\lambda)y_{n-1}}{\beta y_n^2+\epsilon y_n+\lambda+2(\alpha y_n^2+\delta y_n+\kappa)y_{n-1}}.
\ee
{  Alternating maps of QRT type were introduced in \cite{Quispel-2003}. Mapping (\ref{qv}) stands as a prototypical example of an alternating map of non-QRT (HKY) type.}
Setting $\mathscr{A}_0=\mathscr{A}_0^T,$ the mapping $(\ref{qv})$ becomes autonomous. It is exactly the $(1,1)$-reduction of Viallet's $Q_V$ integrable partial difference equation \cite{viallet-2009}. Further specialisation of the parameters leads to the $(1,1)$-reduction of the Adler's $Q4$ partial difference equation. This further reduction was obtained by \cite{joshi-2006,atkinson-2008,kassotakis-2010}. A non-autonomous version of the latter was recently placed in the geometric framework of
the Painlev\'e equations in \cite{Atk-Joshi-2016}.
\end{example}

\subsection{Quadrirational  maps} \label{quadri}

In his studies on Yang-Baxter maps and their interplay with the theory of geometric crystals \cite{et-2003}, Etingof introduced the notion of {\it non-degenerate rational maps}. Nowadays, Etingof's terminology is widely  used but with a different name. Instead of the term non-degenerate rational maps, the terminology {\it quadrirational maps} is used. In \cite{Kassotakis-2016} a natural extension of the notion of quadrirational maps to arbitrary dimensions was introduced. Here we show that the maps constructed in Proposition 2.3 (vii) are quadrirational.

\begin{dfn}
 A rational map  $ \mathbb{P}^1\times \mathbb{P}^1 \ni (x,y) \rightarrow    (X,Y) \in \mathbb{P}^1\times \mathbb{P}^1$ is called quadrirational if the maps $(X,Y)\rightarrow (x,y),$ $(X,y)\rightarrow (x,Y),$ and $(x,Y)\rightarrow (X,y)$ are all rational maps.
\end{dfn}
By direct computation, we find that mappings $\Omega_i\; i=1,2$  are quadrirational and they explicitly read:
$$
\begin{array}{l}
\Omega_1=j_1\circ k_2: (x,y)\mapsto(X,Y), \\ [3mm]
\mbox{where}\;\;\left\{\begin{array}{l}
X=-{\dis \frac{2(\kappa y^2+\lambda y+\mu)+(\delta y^2+\epsilon y+\zeta)x}{\delta y^2+\epsilon y+\zeta+2(\alpha y^2+\beta y+\gamma)x}},\\[3mm]
Y= -{\dis \frac{2(|\mathscr{A}_0|_{31} x^2+|\mathscr{A}_0|_{21} x+|\mathscr{A}_0|_{11})+(|\mathscr{A}_0|_{32} x^2+|\mathscr{A}_0|_{22} x+|\mathscr{A}_0|_{12})y}{|\mathscr{A}_0|_{32} x^2+|\mathscr{A}_0|_{22} x+|\mathscr{A}_0|_{12}+
2(|\mathscr{A}_0|_{33} x^2+|\mathscr{A}_0|_{23} x+|\mathscr{A}_0|_{13})y}},
     \end{array}\right.
     \end{array}
$$
$$
\begin{array}{l}
\Omega_2=j_2\circ k_1: (x,y)\mapsto(X,Y), \\ [3mm]
\mbox{where}\;\; \left\{ \begin{array}{l}
X=-{\dis \frac{2(|\mathscr{A}_0|_{13} y^2+|\mathscr{A}_0|_{12} y+|\mathscr{A}_0|_{11})+(|\mathscr{A}_0|_{23} y^2+|\mathscr{A}_0|_{22} y+|\mathscr{A}_0|_{21})x}{|\mathscr{A}_0|_{23} y^2+|\mathscr{A}_0|_{22} y+|\mathscr{A}_0|_{21}+
2(|\mathscr{A}_0|_{33} y^2+|\mathscr{A}_0|_{32} y+|\mathscr{A}_0|_{31})x}},\\ [3mm]
Y= -{\dis \frac{2(\gamma x^2+\zeta x+\mu)+(\beta x^2+\epsilon x+\lambda)y}{\beta x^2+\epsilon x+\lambda+2(\alpha x^2+\delta y+\kappa)y}}.
    \end{array}\right.
    \end{array}
$$
Note that there is $\Omega_1^2=\Omega_2^2=id$ and by
following Proposition \ref{prop1}, both $\Omega_1$ and $\Omega_2$ factorise:
\begin{align*}
\Omega_1&=j_1\circ k_2=J\circ S \circ J\circ S =(J\circ S)^2,\\
\Omega_2&=j_2\circ k_1=R\circ S \circ J \circ R\circ S \circ R \circ J \circ R=(S\circ R \circ J \circ R)^2.
\end{align*}
Clearly, we also have  $(J\circ S)^4=(S\circ R \circ J \circ R)^4=id,$ so $\omega_1:=J\circ S$ and
$\omega_2:=R\circ S \circ J \circ R$ are period 4 maps.

Both maps $\Omega_i$  are involutions and  preserve the same pencil of biquadratic curves.  Unlike the QRT mapping, whose singular points
is given by the set $\Sigma$ of the base points of the pencil, the mapping $\Omega_1$ has $\sigma_1$ as a set of singular points and the set $\sigma_2$  as
fixed points,
while the mapping $\Omega_2$  has $\sigma_2$ as the set of singular points and $\sigma_1$ as  the set of fixed points.  Because of this property we will call $\Omega_2$ the dual map of the map $\Omega_1$.

It was proven in \cite{ABSyb-2004,pap3-2010} that   quadrirational  maps are M\"obius equivalent to the  $F$-list of Yang-Baxter maps, so essentially $\Omega_i$ are Yang-Baxter maps. Moreover both maps $\Omega_i$ preserve the same integral $I(x,y)$ and they share the same set of parameters namely the matrix $\mathscr{A}_0$.
In Table 1, we present the parameter matrices for which  $\Omega_1$  (that is reffered to as $\mathscr{R}$) corresponds to the $F$-list \cite{ABSyb-2004} of quadrirational Yang-Baxter maps, as well as the maps  $\omega_1$ (that is reffered to as $r$) which satisfy $r^2=\mathscr{R}$. In Table 2, we present the corresponding values of the parameter matrices for the mappings $\Omega_2$ (that is reffered to as $\mathscr{L}$) and $\omega_2$ (that is reffered to as $l$).

Because no distinction is made between mappings $\mathscr{L}$  and the Yang-Baxter maps $\mathscr{R}$ though out our construction, the mappings $\mathscr{L}$ can be considered as constituting the {\em dual $F$-List}. Although $\mathscr{L}$ are not Yang-Baxter maps, nevertheless the common property shared by the $F$-list and  dual $F$-list is that both maps $r$ and $l$ satisfy the modified Yang-Baxter relation (\ref{byb}) namely:
$$
 (\mathscr{M}_{ij}\circ \mathscr{M}^{-1}_{ik} \circ \mathscr{M}_{jk})^4=id,
$$
where $\mathscr{M}^4=id$ and $\mathscr{M}$ is either the map $r$ or the map $l$.

\begin{example}[The $F_I$ Yang-Baxter map and its dual]
Choosing the parameter matrix $\mathscr{A}_0$ as
  $$
\mathscr{A}_0=\left( \begin{array}{lll}
            0&p-q&q(1-p)\\
            0&2p(q-1)&0\\
            p(1-p)&p(p-q)&0
            \end{array}
\right),$$
the integral reads
{\small $$
I(x,y)= \frac{(q-p)x^2y+q(p-1)x^2+p(p-1)y^2+2p(1-q)xy+p(q-p)y}
{(p-q)xy^2+q(q-1)x^2+p(q-1)y^2+2q(1-p)xy+q(p-q)x}.
$$}
In this case, the map $\Omega_1$  becomes exactly the $F_I$ Yang-Baxter map,  while  $\Omega_2$  is a new map, which we denote by ${\hat F_I}$ and describe as the dual of $F_I$. Their explicit expressions are given by
\begin{align*}
&\mathscr{R}: (x,y)\mapsto (X,Y),\\
&\qquad \mbox{where}\;\; \begin{cases}
X&=p y \mathscr{P} ,\;\;  Y=q x \mathscr{P} ,\\
\mathscr{P}&={\dis \frac{(1-q) x+q-p+(p -1)y}{q(1-p) x+(p-q)xy+p (q-1)y}},
\end{cases}&(F_I)\\
&\mathscr{L}:(x,y)\mapsto (X,Y),\\
&\qquad\mbox{where}\;\;\begin{cases}
X&=y+ W(x,y,p,q),\;\;
 Y=x+ W(y,x,q,p),\\
 W(x,y,p,q)&={\dis \frac{(q-p)\left(q(x+y-2 x y)+y^2(x+y-2)\right)}{q(p-q)2qx(q-1)+y\left(2q-2pq+(p-q)y\right)}},
\end{cases}& ({\hat F_I})
\end{align*}

As $ F_I$ is a Yang-Baxter map, it  satisfies:
$
\mathscr{R}^2=id, \quad (\mathscr{R}_{ij}\circ \mathscr{R}_{ik}\circ \mathscr{R}_{jk})^2=id.
$
We also have the maps $r$ and $l$ that correspond to the   $\omega_i, i=1,2$ hence they satisfy $r^2=\mathscr{R}$  and $l^2=\mathscr{L}.$ They read:
\[
 \begin{aligned}
& r:(x,y,p,q)\mapsto (X,Y,P,Q)\\
&\qquad \mbox{where}\;\;\left\{\begin{array}{ll}
X=qx {\dis \frac{(1-q) x+q-p+(p -1)y}{q(1-p) x+(p-q)xy+p (q-1)y}}, &  P=q,\\
Y=x,& Q=p,
\end{array}\right.\\
& l:(x,y,p,q)\mapsto (X,Y,P,Q)\\
&\qquad \mbox{where}\;\;\left\{\begin{array}{ll}
X=x +{\dis \frac{-(q-p)\left(p(x+y-2 x y)+x^2(x+y-2)\right)}{-p(p-q)2py(p-1)+x\left(2p-2pq-(p-q)x\right)}}, &  P=q,\\
Y=x, & Q=p.
\end{array}\right.
\end{aligned}
\]
It is natural to ask whether there is a common property that both maps share. The answer is yes,
both $r, l$ (recalling $\mathscr{R}=r^2, \mathscr{L}=l^2,$) satisfy  a modified Yang-Baxter relation:
\begin{equation} \label{yb2}
\mathscr{M}^4=id, \quad (\mathscr{M}_{ij}\circ \mathscr{M}^{-1}_{ik}\circ \mathscr{M}_{jk})^4=id.
\end{equation}

\end{example}

\begin{longtable}{|c|c|l|l|}
 \caption{The  $F$-list of Yang-Baxter maps}
 \\ \hline
   & $\mathscr{A}_0$  & $\mathscr{R}:(x,y)\mapsto (X,Y)$ & $r:(x,y,p,q)$\\
    &&&\ $\mapsto (X,Y,P,Q)$\\ [4mm]  \hline
  $F_I$ & $\left( \begin{array}{lll}
            0&p-q&q(1-p)\\
            0&2p(q-1)&0\\
            p(1-p)&p(p-q)&0
            \end{array}\right)$
            &
           $\begin{array}{l}
            X=p y W(x,y,p,q),\\
            Y=q x W(x,y,p,q),\\
             W(x,y,p,q)= \frac{(1-q) x+q-p+(p -1)y}{q(1-p) x+(p-q)xy+p (q-1)y}
           \end{array}$
    &  $\begin{array}{ll}
          X=q x W(y,x,q,p) ,\\
          Y=x,\\
          P=q,\\
          Q=p
        \end{array}$ \\
        \hline
  $F_{II}$ & $\left( \begin{array}{lll}
            0&0&p\\
            0&-2p&0\\
           q&(p-q)&0
            \end{array}\right)$
            & $\begin{array}{l}
              X=y/p W(x,y,p,q),\\
              Y=x/q W(x,y,p,q),\\
              W(x,y,p,q)= \frac{p x-q y+q-p}{x-y},
              \end{array}$
    &  $\begin{array}{ll}
         X=x/q W(y,x,q,p) , \\
         Y=x,\\
          P=q,\\
          Q=p
        \end{array}$ \\  \hline
  $F_{III}$ & $\left( \begin{array}{lll}
            0&0&p\\
            0&-2p&0\\
           q&0&0
            \end{array}\right)$
            & $\begin{array}{l}
            X=y/p W(x,y,p,q),\\
            Y=x/q W(x,y,p,q),\\
            W(x,y,p,q)= \frac{p x-q y}{x-y}
             \end{array}$
    &  $\begin{array}{ll}
        X=x/q W(y,x,q,p),\\
        Y=x,\\
        P=q,\\
        Q=p
\end{array}$ \\  \hline
  $F_{IV}$ & $\left( \begin{array}{lll}
            0&0&1\\
            0&-2&0\\
           -1&(p-q)&0
            \end{array}\right)$  &
            $\begin{array}{l}
            X=y W(x,y,p,q),\\
            Y=x W(x,y,p,q),\\
           W(x,y,p,q)= 1-\frac{p-q}{x-y}
            \end{array}$
    &  $\begin{array}{ll}
        X=x W(y,x,q,p) , \\
        Y=x,\\
        P=q,\\
        Q=p
        \end{array}$ \\  \hline
    $F_{V}$ & $\left( \begin{array}{lll}
            0&0&1\\
            0&-2&0\\
           1&0&q-p
            \end{array}\right)$
            & $\begin{array}{l}
               X=y+ W(x,y,p,q),\\
               Y=x+ W(x,y,p,q),\\
               W(x,y,p,q)= \frac{p-q}{x-y}
              \end{array}$
    &  $\begin{array}{ll}
       X=x+ W(y,x,q,p) ,\\
       Y=x,\\
       P=q,\\
       Q=p
\end{array}$ \\
  \hline
\end{longtable}
\begin{longtable}{|c|c|l|l|}
 \caption{The dual ${\hat F}$-list of Yang-Baxter maps}
 \\ \hline
   & $\mathscr{A}_0$  & $\mathscr{R}:(x,y)\mapsto (X,Y)$ & $r:(x,y,p,q)\mapsto (X,Y,P,Q)$  \\ [4mm]   \hline
  ${\hat F}_I$ & $\left( \begin{array}{lll}
            0&p-q&q(1-p)\\
            0&2p(q-1)&0\\
            p(1-p)&p(p-q)&0
            \end{array}\right)$
            &
           $\begin{array}{l}
            X=y+ W(x,y,p,q),\\
             Y=x+ W(y,x,q,p),\\
             W(x,y,p,q)=\\ \frac{(q-p)\left(q(x+y-2 x y)+y^2(x+y-2)\right)}{q(p-q)2qx(q-1)+y\left(2q-2pq+(p-q)y\right)}
           \end{array}$
    &  $\begin{array}{ll}
          X=x+ W(y,x,q,p) ,\\
          Y=x,\\
          P=q,\\
          Q=p
        \end{array}$ \\
        \hline
  ${\hat F}_{II}$ & $\left( \begin{array}{lll}
            0&0&p\\
            0&-2p&0\\
           q&(p-q)&0
            \end{array}\right)$
            & $\begin{array}{l}
X=y+ W(x,y,p,q),\\
 Y=x+ W(x,y,p,q) \\
  W(x,y,p,q)= \frac{(q-p)(x+y-2xy)}{p-q-2px+2qy}
              \end{array}$
    &  $\begin{array}{ll}
         X=x+W(y,x,q,p) , \\
         Y=x,\\
          P=q,\\
          Q=p
        \end{array}$ \\  \hline
  ${\hat F}_{III}$ & $\left( \begin{array}{lll}
            0&0&p\\
            0&-2p&0\\
           q&0&0
            \end{array}\right)$
            & $\begin{array}{l}
            X=qy W(x,y,p,q),\\
            Y=px W(x,y,p,q),\\
            W(x,y,p,q)= \frac{x-y}{p x-q y}
             \end{array}$
    &  $\begin{array}{ll}
        X=px W(y,x,q,p),\\
        Y=x,\\
        P=q,\\
        Q=p
\end{array}$ \\  \hline
  ${\hat F}_{IV}$ & $\left( \begin{array}{lll}
            0&0&1\\
            0&-2&0\\
           -1&(p-q)&0
            \end{array}\right)$  &
            $\begin{array}{l}
            X=y+W(x,y,p,q),\\
            Y=x+W(x,y,p,q),\\
           W(x,y,p,q)= \frac{(q-p)(x+y)}{p-q-2(x+y)}
            \end{array}$
    &  $\begin{array}{ll}
        X=x+W(y,x,q,p) , \\
        Y=x,\\
        P=q,\\
        Q=p
        \end{array}$ \\  \hline
    ${\hat F}_{V}$ & $\left( \begin{array}{lll}
            0&0&1\\
            0&-2&0\\
           1&0&q-p
            \end{array}\right)$
            & $\begin{array}{l}
               X=y+ W(x,y,p,q),\\
               Y=x+ W(x,y,p,q),\\
               W(x,y,p,q)= -\frac{p-q}{x-y}
              \end{array}$
    &  $\begin{array}{ll}
       X=x+ W(y,x,q,p) ,\\
       Y=x,\\
       P=q,\\
       Q=p
\end{array}$ \\
  \hline
\end{longtable}
\ \\
\ \\

\section{Discussion}
In this paper, we investigated the additional structure that arose  through a refactorisation of each of the QRT involutions which corresponded to a specific class of parameter matrices (\ref{matrices00}).  Liouville integrable maps of QRT  and HKY type, quadrirational Yang-Baxter map as well as some maps with novel dynamics $(\ref{byb})$  arose through our construction. All these  maps were considered as compositions of involutions,  which preserved, anti-preserved   or sent a biquadratic invariant to its reciprocal (see Proposition \ref{prop1}).

A main characteristic  of these  involutions is that they act non-trivially on the parameter matrices $(\mathscr{A}_0, \mathscr{A}_1)$ as well as on the coordinates $(x,y)$. They define groups $G, G' $  with $G\subset G'$  where $G'$ is the hyperbolic $(2,4,\infty)$ triangle Coxeter group  modulo some relations (see point (iii) of Proposition \ref{prop1}).

Inspired by \cite{viallet-1991a,viallet-1991b}, we end this discussion  by extending further the group $G'$ adding the involutions:
$$
H: (x,y;\mathscr{A}_0)\mapsto \left(1/x,1/y;1/\mathscr{A}_0\right),\quad M:(x,y;\mathscr{A}_0)\mapsto (x,y; |\mathscr{A}_0| ), \quad L: (x,y)\mapsto \left(-1/x,-1/y;\Sigma \mathscr{A}_0 \Sigma\right),
$$
where $|\mathscr{A}_0|$ is the matrix  of the minors of  $\mathscr{A}_0,$  ${\dis \Sigma=  \left(\begin{array}{ccc}
                                                                              0&0&1\\
                                                                              0&-1&0\\
                                                                              1&0&0
                                                                              \end{array}\right)} $
                                                                               and $1/\mathscr{A}_0:= {\dis  \left(\begin{array}{ccc}
1/\alpha& 1/\beta &1/\gamma\\
1/\delta& 1/\epsilon& 1/\zeta\\
1/\kappa& 1/\lambda & 1/\mu
\end{array}\right) }  $  is the Hadamard inverse of $\mathscr{A}_0$. Then is is easy to show that
\[{\dis N=R\circ M \circ L:(x,y;\mathscr{A}_0)\mapsto (-1/y,-1/x,\mathscr{A}_0^{-1})},\]
where $\mathscr{A}_0^{-1}$ the inverse matrix of $\mathscr{A}_0.$

Finally, the mappings $\sigma_1=N \circ H,\sigma_2= M\circ H, \sigma_3=N\circ R\circ H,$ neither commute nor satisfy $I\circ \sigma_1=I\circ \sigma_2=I\circ \sigma_3$. They  define nontrivial  integrable dynamics on ${\mathbb P}^8$ if we consider only their action on $\mathscr{A}_0$. Moreover  Kontsevich's periodicity identity \cite{kont-2011,iyudu-2013} can be verified, namely:
$$
(\sigma_3)^3: \mathscr{A}_0\mapsto \Delta_L \mathscr{A}_0 \Delta_R,
$$
where
$$
\Delta_L=\frac{1}{det(\mathscr{A}_0) det(1/\mathscr{A}_0)} \left(\begin{array}{ccc}
                                                               -\alpha \beta \gamma \kappa_1 \lambda_1 \mu_1 &0&0\\
                                                                 0&\delta \epsilon \zeta \delta_1 \epsilon_1 \zeta_1&0\\
                                                                 0&0& \alpha_1 \beta_1 \gamma_1 \kappa \lambda \mu
                                                                 \end{array}\right),
$$
$$
\Delta_R= \frac{1}{det(\mathscr{A}_0) det(1/\mathscr{A}_0)} \left(\begin{array}{ccc}
                                                               \alpha \delta \kappa \gamma_1 \zeta_1 \mu_1 &0&0\\
                                                                 0&-\beta \epsilon \lambda \beta_1 \epsilon_1 \lambda_1&0\\
                                                                 0&0& \alpha_1 \delta_1 \kappa_1 \gamma \zeta \mu
                                                                 \end{array}\right).
$$

{  To conclude, we have shown that the factorisation condition on the QRT involutions is equivalent to the conditions (\ref{n=2-cond}).
 It would be interesting to investigate the remaining conditions of (\ref{lamma1}) and investigate the algebraic structure of the associated maps. When conditions of our Lemma (\ref{lamma1}) are satisfied and placed inside the framework of the general theory recently introduced in \cite{rob-2015}, we expect important classes of biquadratics to arise, which suggests an enrichment of the latter theory.

 Also, our approach might cast some light on the exotic behavior of mappings that map a pencil of biquadratic curves to higher degree curves and return to the original pencil of curves after some iterations. Finally, an open direction of research is to show how the re-factorisation  procedures presented in this manuscript can be extended to novel Liouville integrable maps obtained in higher dimensions \cite{Joshi-Viallet:2018,Joshi:2019}.
}

\subsubsection*{Acknowledgments}
The authors would like to thank J. Atkinson, M. Noumi and J.A.G. Roberts for valuable discussions. PK was supported in part
by the Australian Research Council Discovery Grant No. DP110102001 and NJ by an ARC Georgina Sweet Australian Laureate Fellowship FL120100094.


\end{document}